\documentclass[12pt]{article}
 \pagestyle{plain}
 
\usepackage[margin=1in]{geometry} 
\usepackage{amsmath,amsthm,amssymb,scrextend}
\usepackage{fancyhdr}
\usepackage{comment}
\usepackage{graphicx}
\usepackage{color}
\usepackage{cite}
\usepackage{authblk}

\newtheorem{theorem}{Theorem}
\newtheorem{lemma}{Lemma}

\newtheorem{corollary}{Corollary}
\newtheorem*{theorem1}{Theorem}
\newtheorem{notation}{Notation}

\newcommand{\cc}[0]{\hat{c}}
\newcommand{\rr}[0]{\hat{r}}

\begin{document}
 
\title{A Note on the Ratio of Revenues\\Between Selling in a Bundle and Separately}
\author{Ron Kupfer \thanks{School of Computer Science and Engineering, and Center for the Study of Rationality, Hebrew
University of Jerusalem.}}
\maketitle

\begin{abstract}
We consider the problem of maximizing revenue when selling $k$ items to a single buyer with known valuation distributions. Hart and Nisan \cite{hart2012approximate} raised some questions on the approximation of maximizing revenue from selling a set of items by selling each item separately or all in a bundle. In this work we show that for a single, additive buyer whose valuations for for the items are distributed according to i.i.d.\ distributions which are known to the seller, the  ratio of revenue from selling in a  bundle to selling separately is at least $55.9\%$ and this gap is attainable. 
\end{abstract}

\section{Introduction}
We are looking at the problem of maximizing revenue in the scenario of a single seller selling multiple items to a single buyer. This paper is limited to the case where the bidder's valuations of the items are identically and independently distributed and the bidder valuation is additive. The distributions (given by a cumulative distribution $F$) are known to the seller but their realizations are not.

For one item, the problem was completely solved by Myerson's classic result \cite{myerson1981optimal}. If the seller offers to sell it for a price $p$, then the probability that the buyer will buy is $1 - F(p)$, and the revenue will be $p\cdot(1 - F(p))$. The seller will choose a price $p^*$ that maximizes this expression. Myerson's characterization of optimal auctions concludes that the take-it-or-leave-it offer at the above price $p^*$ yields the optimal revenue among all mechanisms. Myerson's result also applies when there are multiple buyers, in which case $p^*$ would be the reserve price in a second price auction.

For more then one item, the problem of maximizing the revenue is more complicated. One approach is selling each item separately with take-it-or-leave-it offer of its Myerson's price. We denote the revenue from this approach by $SRev$. A second approach is considering all the items as one bundle and selling the bundle for its  Myerson's price. We denote the revenue from this approach by $BRev$.\\
Previous work by Hart and Nisan \cite{hart2012approximate} have shown that while the ratio $SRev/BRev$ may be arbitrarily small for some one-dimensional distributions, the opposite is not true. That is, the infimum of the ratio of $BRev/SRev$ over all the one-dimensional distributions is bounded.\\
Despite the fact that when the number of items goes to infinity this ratio goes to 1 (see, for example, Appendix E in \cite{hart2012approximate}), it was not clear what the exact ratio is for a finite number of items.
Hart and Nisan have shown this ratio is bounded between $\frac{1}{4}$ and $57\%$. We show a tight result on this proportion of approximately $55.9\%$. 

\begin{theorem} \label{main_theorem}
Let $\rr\approx55.9\%$\footnote{$\rr$ will be define formally in the next section}. For every integer $k\geq 1$ and every one-dimensional distribution $F$
\[BRev(F^{k}) \geq \rr kRev(F) = \rr SRev(F^{k})\] 
and the worst-case bound is tight in the sense that for every $\epsilon>0$, there exist an integer $k$ and a one-dimensional distribution $F$ such that 
\[BRev(F^{k}) < (\rr+\epsilon)SRev(F^{k})\]
\end{theorem}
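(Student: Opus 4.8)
The plan is to first normalize: since $BRev$ and $Rev$ scale linearly in the valuations, I assume $Rev(F)=1$, and then Myerson's formula gives the equal-revenue domination $1-F(x)\le 1/x$ for every $x\ge 0$. Because $SRev(F^{k})=k\,Rev(F)=k$, the theorem reduces to two claims: (i) $BRev(F^{k})\ge \hat r\,k$ for every such $F$ and every $k$, and (ii) no larger constant works. It is worth noting that $BRev(F^{k})=Rev\big(\mathrm{law}(S_k)\big)$ where $S_k=X_1+\dots+X_k$, so the whole question is how far the per-item revenue can drop when $k$ i.i.d.\ copies are summed; since $\mathrm{law}(S_k)$ concentrates as $k\to\infty$, the ratio returns to $1$ in the limit, so the worst case lives at finite (though possibly large) $k$, and $\hat r$ will be the infimum of the ratio over all $k$.

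\textbf{Lower bound (i).} I would fix a cutoff $c>0$ (to be optimized) and split $X_i=A_i+B_i$ with $A_i=\min(X_i,c)$, $B_i=(X_i-c)^{+}$, then bound $BRev(F^{k})=\sup_t t\Pr[S_k\ge t]$ below by the best of several explicit prices. First, the trivial price $t=k\cdot\operatorname{ess\,inf}F$ already gives $BRev(F^{k})\ge k\operatorname{ess\,inf}F$ since $S_k\ge k\operatorname{ess\,inf}F$ surely, so one may assume $\operatorname{ess\,inf}F<\hat r$. Second, a ``core'' price: $A_i\in[0,c]$ with $E[A_i]=\int_0^c(1-F(x))\,dx$ and, via $1-F(x)\le 1/x$, $\mathrm{Var}(A_i)\le\int_0^c 2x(1-F(x))\,dx\le 2c$, so Cantelli's inequality at $t=kE[A_1]-\lambda\sqrt{k\,\mathrm{Var}(A_1)}$ yields $BRev(F^{k})\ge\big(kE[A_1]-\lambda\sqrt{2ck}\big)\frac{\lambda^2}{1+\lambda^2}$. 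Third, a ``tail'' price: on the event that at least $b$ of the $X_i$ exceed $c$ one has $S_k\ge bc$, so $BRev(F^{k})\ge\sup_b bc\,\Pr[\mathrm{Bin}(k,1-F(c))\ge b]$, a quantity again controlled by $1-F(c)\le 1/c$ and, in the $k\to\infty$ scaling, by a Poisson tail. Taking the maximum over these prices and optimizing $c$ and $\lambda$ should give $BRev(F^{k})\ge\hat r\,k$. Two regimes must be separated: when $F$'s own Myerson price $x^{*}$ (with $x^{*}(1-F(x^{*}))=1$) is $O(1)$, the core already carries essentially all of $Rev=1$ and the core estimate alone forces the ratio to $1$; when $x^{*}$ is comparable to $k$, the core mean is small relative to its spread and the tail estimate dominates. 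Solving the optimization that these estimates reduce to — and thereby identifying $\hat r$ — is the analytic core of the argument.

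\textbf{Tightness (ii), and the main obstacle.} For the matching instances I expect the extremal family to be exactly the one on which the core–tail split is tight: a distribution $F=(1-\lambda/k)\delta_{0}+(\lambda/k)G$ with fixed $\lambda$ and a fixed finitely supported $G$ normalized to $Rev(G)=1$, so that $SRev(F^{k})=k\,Rev(F)=\lambda$ exactly while $\mathrm{law}(S_k)$ converges as $k\to\infty$ to the compound Poisson variable $\sum_{i=1}^{\mathrm{Pois}(\lambda)}G_i$; one then computes $Rev$ of that compound Poisson, optimizes over $\lambda$ and $G$, and for each $\varepsilon>0$ takes rational parameters and $k$ large enough to land within $\varepsilon$. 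The hard part will be to make the two optimizations — the lower bound from the core/tail prices and the construction — provably coincide at the single value $\hat r\approx 0.559$: this demands that no slack be lost in the variance bound or the binomial/Poisson step of the lower bound, and that the candidate $\lambda$ and $G$ genuinely minimize the construction's ratio (in particular, ruling out that a multi-scale or continuous jump law $G$ does strictly better), which I expect to be the most delicate point.
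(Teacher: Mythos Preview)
Your core--tail split with Cantelli's inequality will not reach the sharp constant $\hat r$. Cantelli is a second-moment bound and is generically loose (it is tight only for two-point laws, whereas your core sum $\sum A_i$ is asymptotically Gaussian), so the optimization over $c$ and $\lambda$ you sketch will output some constant strictly below $\hat r$, and you will be unable to close the gap against the construction. The paper bypasses this with a one-line reduction you are missing: if $p$ is the Myerson price of $F$ and $q=\Pr[X\ge p]$, replace $F$ by the two-point law $G$ on $\{0,p\}$ with $\Pr_G(p)=q$. Then $SRev(G^k)=kpq=SRev(F^k)$ exactly, while for any bundle price $pd$ with integer $d$ the sale event under $G$ requires at least $d$ of the indicators $\{X_i\ge p\}$ to occur, which already forces $\sum X_i\ge pd$ under $F$; hence $BRev(F^k)\ge BRev(G^k)$. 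This collapses the entire lower bound to Bernoulli laws, the bundle sum becomes $\mathrm{Bin}(k,c/k)$, and (via the Anderson--Samuels binomial--Poisson comparison for $k\ge 4$, a Chernoff bound for large $c$, and direct checks for $k=2,3$) the ratio is bounded below by $\max_d d\,h_d(c)/c$ with $h_d(c)=\Pr[\mathrm{Pois}(c)\ge d]$. This is a single one-parameter function whose infimum over $c>0$ is, by definition, $\hat r$, attained at the $\hat c$ solving $2h_2(\hat c)=3h_3(\hat c)$.

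The same reduction dissolves what you flag as the most delicate point. Applied to any member of your compound-Poisson family $(1-\lambda/k)\delta_0+(\lambda/k)G$, it shows the ratio is at least that of the associated two-point law; so the optimal jump law is simply $G=\delta_1$, and the matching construction is just the Bernoulli law with $\Pr(1)=\hat c/k$ for large $k$. No multi-scale or continuous $G$ needs to be ruled out, and the upper and lower optimizations coincide automatically because both are the \emph{same} Poisson-tail optimization $\min_{c}\max_d d\,h_d(c)/c$.
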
 

\section{Notation and Preliminaries}

For a cumulative distribution $F$ on \(\mathbb{R}^{k}_{+}\) (for \(k \geq 1\)), we consider the optimal revenue, $Rev(F)$, obtainable from selling $k$ items to a (single, additive) buyer whose valuation for the $k$ items is jointly distributed according to $F$, and define:
\begin{itemize}
\item \(SRev(F)\) is the maximal revenue obtainable by selling each item separately.
\item \(BRev(F)\) is the maximal revenue obtainable by bundling all items and selling them together.
\end{itemize}

Let $B(m;k,c/k) = \sum_{i = 0}^{m}\binom{k}{m}(\frac{c}{k})^i(1-\frac{c}{k})^{k-i}$ be the cumulative function of the binomial distribution\footnote{I.e., the probability of at most $m$, out of $k$, to be successful where each event is successful with independent probability $c/k$}
and \(P(m;c)=e^{-c}\sum_{i = 0}^{m}{c^i/i!}\) be the cumulative function of the Poisson distribution, which is a known approximation to $B(m;k,c/k)$ as $k$ goes to infinity.

In this work, we will use an approximation for the probability for a buyer to have a valuations sum of at least $d$ where each item valuation is distributed over $\{0,1\}$ with \(\mathbb{P}(1) = x/k\). This event is equivalent to the complementary event described by $B(d-1;k,x/k)$ and its probability is $1 - B(d-1;k,x/k)$.

\begin{notation}
	$h_d(x) = 1-P(d-1;x) = 1-e^{-x}\sum_{i = 0}^{d-1}{\frac{x^i}{i!}}$ is an approximation of the above probability.

In addition, we will need the following constants:
\begin{enumerate}
\item \(\cc = 2.655...\) is the solution $x$ of \( 2h_2(x) =  3h_3(x)\)
\item \(\rr = 2h_2(\cc)/\cc \approx 0.599...\)
\end{enumerate}
\end{notation}

\section{Upper Bound}

In this section we prove the upper bound of Theorem \ref{main_theorem} on the worst-case $BRev/SRev$ ratio.

\begin{theorem1} \label{my_examp}
For any $\epsilon>0$, there exist an integer $k$ and a one-dimensional distribution $F$ such that 
\[BRev(F^{k}) < (\rr+\epsilon)SRev(F^{k})\]
\end{theorem1}
 
\begin{proof}
Consider the distribution $F$ over \(\{ 0,1\}\) with \(\mathbb{P}(1) = \cc/k\), so the revenue from optimally selling a single item is $\cc/k$. For estimating the bundle revenue from posted price $d$, we can use the approximation $h_d(\cc)$ for having a valuations sum of at least $d$ and getting an expected revenue of $dh_d(\cc)$. Doing so, we can see that the revenue from posting bundle price of $2$ or $3$ will gain approximately $2h_2(\cc) =  3h_3(\cc) \approx 1.48...$ .\\
For large enough $k$, the probability of the valuations sum to be at least $d$ is bounded by $\cdot \cc^d/d!$. Hence, the revenue is bounded by $d\cdot \cc^d/d!$, which for $d>7$ is$\ <1.3$. For $4\leq d\leq 7$, we have $dh_d(\cc)<1.48$. Thus, \(BRev(F^{k})/SRev(F^{k}) = 2h_d(\cc)/\cc + \epsilon = \rr + \epsilon\) for $k$ large enough.
\end{proof}

\section{Lower Bound}
In this section we will show the lower bound of Theorem \ref{main_theorem}. In order to show this, we first prove the theorem in the case where the values are drawn from a Bernoulli distribution. We start by showing in Lemma \ref{c_geq_40} that in this case, the distribution $F$ over \(\{ 0,1\}\) with \(\mathbb{P}(1) = c/k\) with $c\geq 40$, will gain a high enough profit with high enough probability by selling in a bundle. Then we handle the case of $c<40$ by proving it separately for $k=2,3$, and $k \geq 4$ in Lemmas \ref{k_2},\ref{k_3} and \ref{c_l_40__k_geq_4}. To prove lemma \ref{c_l_40__k_geq_4}, we use lemma \ref{quet_binom} due to Anderson and Samuels, which bounds the difference between the binomial and Poisson distributions, and we use it to bound the approximation of the probability of getting a price $d$, $h_d$ for a small number of items.
We conclude by proving the lower bound of Theorem \ref{main_theorem} for any one-dimensional distribution.

\begin{lemma} \label{c_geq_40}
For any integer $k\geq 1$ and $c\geq 40$, the distribution $F$ over \(\{ 0,1\}\) with \(\mathbb{P}(1) = c/k\) satisfies
\[BRev(F^{k})/SRev(F^{k}) > 0.56\]
\end{lemma}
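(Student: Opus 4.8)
The plan is to show that when $c \geq 40$, bundling captures a large fraction of the total value $c$ (since $SRev(F^k) = c$ here, because each item sold separately yields $c/k$ and there are $k$ of them), because with $k$ i.i.d. Bernoulli$(c/k)$ items the sum concentrates around $c$. First I would fix the bundle price $d = \lfloor \alpha c \rfloor$ for a suitable constant $\alpha < 1$ (something like $\alpha = 0.9$ or so), so that $BRev(F^k) \geq d \cdot \Pr[\text{sum} \geq d]$. The value of $SRev$ is exactly $c$ (one must check that $p=1$ is the Myerson-optimal single-item price for $\mathbb{P}(1) = c/k$ when $c/k \le 1$, which is immediate: revenue from price $1$ is $c/k$, and there is no other posted price in $\{0,1\}$ that does better). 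So the ratio is at least $\frac{d}{c}\Pr[\text{sum}\geq d] \geq \alpha \cdot \Pr[\text{sum} \geq \alpha c]$ up to the floor correction.

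The core estimate is a concentration/anti-concentration bound: I need $\Pr[\text{Binomial}(k, c/k) \geq \alpha c]$ to be close to $1$, uniformly in $k \geq 1$, once $c \geq 40$. The mean is exactly $c$ and the variance is $c(1 - c/k) \leq c$. A Chebyshev bound gives $\Pr[\text{sum} < \alpha c] = \Pr[\text{sum} < c - (1-\alpha)c] \leq \frac{c}{(1-\alpha)^2 c^2} = \frac{1}{(1-\alpha)^2 c}$. With $c \geq 40$ and, say, $\alpha = 0.9$, this is at most $\frac{1}{0.01 \cdot 40} = 2.5$, which is useless — Chebyshev alone is too weak at $c = 40$. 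So I would instead use a Chernoff/multiplicative bound for the lower tail of the binomial: $\Pr[\text{sum} \leq (1-\delta)c] \leq \exp(-\delta^2 c/2)$ (valid since the mean is $c$ regardless of $k$). Taking $\delta = 1 - \alpha$ with $\alpha$ chosen so that both $\alpha$ and the resulting bound cooperate: e.g. $\alpha = 0.85$, $\delta = 0.15$, gives tail $\leq \exp(-0.0225 \cdot 40 / 2) = \exp(-0.45) \approx 0.638$ — still not enough. The issue is that $c = 40$ is genuinely not that large, so I'd want $\alpha$ smaller, balancing $\alpha$ against $e^{-\delta^2 c /2}$: with $\alpha$ around $0.7$–$0.75$, $\delta \approx 0.27$, tail $\leq e^{-0.27^2 \cdot 20} \approx e^{-1.46} \approx 0.23$, giving ratio $\geq 0.72 \cdot 0.77 \approx 0.55$, which is borderline. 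To get safely above $0.56$ I expect to need a sharper Chernoff form — the one with $\exp(-c h(\delta))$ where $h(\delta) = (1-\delta)\ln(1-\delta) + \delta$ — or to handle the degenerate small-$k$ cases ($k \le c$, where the binomial is supported on $\{0,\dots,k\}$ and in fact $k < 40$ forces $c/k \ge 1$, meaning $\mathbb{P}(1) = c/k \wedge 1 = 1$, so the sum is deterministically $k$ and the bundle at price $k$ yields exactly $k = SRev$, ratio $1$) separately.

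So the clean structure is: (i) if $k \le c$, then $c/k \ge 1$ forces the valuation to be deterministically $1$ on each coordinate, the sum is exactly $k$, bundling at price $k$ gives ratio $1 > 0.56$; (ii) if $k > c$ (so $c/k < 1$ is a genuine probability), apply the multiplicative Chernoff lower-tail bound to $\text{Binomial}(k, c/k)$ with mean exactly $c \ge 40$, pick $d = \lceil \alpha c \rceil$ for an explicit $\alpha$, and compute $\frac{d}{c}\Pr[\text{sum} \ge d] > 0.56$ using $c \ge 40$ to make the exponential tail negligible. The main obstacle is the constant-chasing in step (ii): $c = 40$ is small enough that a crude Chernoff bound with a convenient $\alpha$ may land just under $0.56$, so I expect to need either the sharp relative-entropy Chernoff bound or a slightly more careful two-regime argument (one $\alpha$ for moderate $c$, another as $c \to \infty$), and to verify the floor/ceiling rounding doesn't erode the margin. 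There is also a minor subtlety that $BRev$ is defined via the genuinely optimal bundle (Myerson) price, so using any particular price $d$ only gives a lower bound — which is exactly what we want here, so this is harmless.
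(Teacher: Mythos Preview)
Your plan is correct and is essentially the paper's approach: bound the lower tail of $\mathrm{Binomial}(k,c/k)$ (mean $c\ge 40$) via Chernoff, then price the bundle at roughly $\alpha c$. The paper resolves the constant-chasing you flagged by using exactly the sharp relative-entropy form you suggested, with threshold $0.65c$ (so $\delta=0.35$, giving tail $\le (e^{-0.35}/0.65^{0.65})^{40}<0.1$), and then picking an integer price $d$ in the interval $[0.625c,\,0.65c]$ (which always contains an integer since $0.025c\ge 1$ for $c\ge 40$), yielding ratio $\ge 0.625\cdot 0.9 = 0.5625 > 0.56$.
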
 
\begin{proof}
Using the Chernoff bound we can bound the probability of selling in a bundle for price of at least  $0.65c$ for $c\geq 40$ by $1-\left(\frac{e^{-0.35}}{(0.65)^{0.65}}\right)^{40}>0.9$. 
For $c\geq 40$, let $d$ be an integer s.t. $0.625c \leq d \leq 0.65c$. Such $d$ always exists. Since $d\geq 0.625c$ and the probability of getting at least $d$ is more than $90\%$, $BRev(F^{k})/SRev(F^{k}) \geq 0.5625$.
\end{proof}

To prove the case of $c<40$ we preform a separate analysis of the cases $k=2,3$, and $k \geq 4$.

\begin{lemma} \label{k_2}
For $k=2$, and any $c$, the distribution $F$ over \(\{ 0,1\}\) with \(\mathbb{P}(1) = c/2\) satisfies
\[BRev(F^{2})/SRev(F^{2}) \geq 2/3\]
\end{lemma}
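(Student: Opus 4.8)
The plan is to make everything explicit: compute $SRev(F^{2})$ and $BRev(F^{2})$ as functions of the single parameter $p:=c/2\in[0,1]$, and then reduce the claimed inequality to a one-variable comparison that splits into two easy cases. Since every quantity involved is a polynomial in $p$, the whole argument is elementary; there is no genuine obstacle beyond bookkeeping.

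First I would record the single-item revenue. For a Bernoulli$(p)$ valuation supported on $\{0,1\}$, the only posted prices that ever produce a sale lie in $(0,1]$, and among those the price $1$ is optimal, giving $Rev(F)=p$. Hence $SRev(F^{2})=2\,Rev(F)=2p=c$.

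Next I would compute the bundle revenue. The bundle value $X_1+X_2$ takes the values $0,1,2$ with probabilities $(1-p)^{2}$, $2p(1-p)$, $p^{2}$, so the only bundle prices worth considering are $1$ and $2$. Pricing the bundle at $1$ yields $1\cdot\bigl(1-(1-p)^{2}\bigr)=2p-p^{2}$, and pricing it at $2$ yields $2p^{2}$. Therefore $BRev(F^{2})=\max\{\,2p-p^{2},\ 2p^{2}\,\}$.

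Finally, dividing through by $SRev(F^{2})=2p$ (the case $p=0$ being vacuous), the inequality to prove becomes $\max\{\,1-\tfrac{p}{2},\ p\,\}\ge \tfrac{2}{3}$. This follows from a two-line case split: if $p\le\tfrac{2}{3}$ then $1-\tfrac{p}{2}\ge 1-\tfrac{1}{3}=\tfrac{2}{3}$, and if $p\ge\tfrac{2}{3}$ then the second term already gives $p\ge\tfrac{2}{3}$. Equality holds exactly at $p=\tfrac23$ (that is, $c=\tfrac43$), which shows the constant $\tfrac{2}{3}$ cannot be improved for $k=2$. The only point requiring a word of care is justifying that prices $1,2$ (for the bundle) and price $1$ (for a single item) are the only candidates — but this is immediate from the support of $F$.
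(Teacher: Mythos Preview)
Your proof is correct and follows essentially the same approach as the paper: compute the bundle revenue at prices $1$ and $2$, divide by $SRev(F^{2})=c$, and split into two cases at the crossover point $c=4/3$ (equivalently $p=2/3$). Your version is slightly more detailed---you justify why only prices $1$ and $2$ matter and note that the bound is tight at $p=2/3$---but the argument is the same.
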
 
\begin{proof}
We consider two cases. For $c\leq 4/3$, if we sell at price $1$ then the revenue ratio is $\frac{2(c/2)-(c/2)^2}{c}$, which in this range gives a ratio of at least $2/3$. For $c>4/3$, if we sell at price $2$ then the revenue ratio is $\frac{2((c/2)^2)}{c}$, which in this range also gives a ratio of at least $2/3$.
\end{proof}

\begin{lemma} \label{k_3}
For $k=3$, and any $c$, the distribution $F$ over \(\{ 0,1\}\) with \(\mathbb{P}(1) = c/3\) satisfies
\[BRev(F^{3})/SRev(F^{3}) > 0.6\]
\end{lemma}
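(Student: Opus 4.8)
The plan is to mimic the structure of Lemma \ref{k_2}: for a Bernoulli distribution $F$ over $\{0,1\}$ with $\mathbb{P}(1)=c/3$, we have $SRev(F^3)=3\cdot Rev(F)=3\cdot(c/3)=c$ (assuming $c/3\le 1$, i.e.\ $c\le 3$; for $c>3$ the item value is $1$ deterministically and $SRev=BRev=3$ trivially), and the sum of the three i.i.d.\ valuations is a binomial random variable taking values in $\{0,1,2,3\}$. So $BRev(F^3)$ is the maximum over bundle prices $d\in\{1,2,3\}$ of $d\cdot\mathbb{P}(\text{sum}\ge d)$, and we must show this max is always more than $0.6c$. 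Writing $p=c/3$, the three relevant quantities are:
\[
R_1 = 1\cdot\bigl(1-(1-p)^3\bigr),\qquad R_2 = 2\cdot\bigl(3p^2(1-p)+p^3\bigr),\qquad R_3 = 3\cdot p^3,
\]
and the goal reduces to showing $\max(R_1,R_2,R_3)/(3p) > 0.6$ for all $p\in(0,1]$ (with the degenerate large-$c$ case handled separately, and noting $c\ge 40$ is already covered by Lemma \ref{c_geq_40}, so really only $c<40$, i.e.\ $p<40/3$, matters, but the clean range is $p\le 1$).

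The key steps, in order: first I would reduce to the Bernoulli case and fix notation as above, dividing each $R_i$ by $3p$ to get ratios $\rho_1(p)=\frac{1-(1-p)^3}{3p}$, $\rho_2(p)=\frac{2p(3(1-p)+p)}{3}=\frac{2p(3-2p)}{3}$, and $\rho_3(p)=p^2$. Then I would partition $(0,1]$ into two or three subintervals by comparing these functions, exactly as in the $k=2$ proof. For small $p$, $\rho_1$ dominates: as $p\to 0$, $\rho_1\to 1$, and one checks $\rho_1(p)=1-p+\frac{p^2}{3}$ is decreasing on $[0,1]$ and stays above $0.6$ until $p$ is fairly large (around $p\approx 0.45$, where $1-p+p^2/3=0.6$). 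For larger $p$, $\rho_2$ takes over: $\rho_2(p)=\frac{2p(3-2p)}{3}$ is increasing on $(0,3/4)$, concave, with $\rho_2(3/4)=3/4>0.6$ and $\rho_2(1)=2/3>0.6$; checking $\rho_2(p)\ge 0.6$ amounts to a quadratic inequality $2p^2-3p+0.9\le 0$, whose roots straddle the crossover point with $\rho_1$. So I would locate the crossover $p^\ast$ where $\rho_1(p^\ast)=\rho_2(p^\ast)$ and verify $\rho_1>0.6$ on $(0,p^\ast]$ and $\rho_2>0.6$ on $[p^\ast,1]$; the strictness comes from the fact that the two bounds $0.6$ curves don't both touch at the same point.

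The main obstacle I anticipate is purely a matter of getting the constant to come out strictly above $0.6$ rather than merely at $0.6$ — the quadratic inequalities are elementary, but one has to be careful that the worst case (the crossover point between using price $1$ and price $2$) genuinely clears $0.6$ with room to spare. Concretely, I expect the tightest point to be where $\rho_1=\rho_2$, i.e.\ $1-p+p^2/3 = \frac{2p(3-2p)}{3}$, which rearranges to $3-3p+p^2 = 6p-4p^2$, i.e.\ $5p^2-9p+3=0$, giving $p=\frac{9-\sqrt{21}}{10}\approx 0.4417$; plugging back in gives a ratio of about $0.626$, comfortably above $0.6$. So after pinning down this root and evaluating, the lemma follows; the only care needed is to confirm monotonicity of $\rho_1$ (decreasing) and of $\rho_2$ (increasing on the relevant range) so that a single crossover point really is the minimum of $\max(\rho_1,\rho_2)$, and to note $\rho_3$ never binds since $\rho_3(p)=p^2<\rho_2(p)$ for $p<1$. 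I would also explicitly dispatch the case $c\ge 3$ (equivalently $p\ge 1$) at the start, where the distribution is degenerate and the ratio is exactly $1$.
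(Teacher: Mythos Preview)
Your approach is essentially the paper's: a case analysis over the bundle price $d\in\{1,2,3\}$, with the worst ratio occurring at the price-$1$/price-$2$ crossover $p=(9-\sqrt{21})/10\approx 0.4417$ (equivalently $c\approx 1.325$), where the ratio is about $0.623$. The paper splits into three cases, switching to price $3$ for $c>2.5714$ (i.e.\ $p>6/7$), whereas you use only two cases; your simplification is legitimate because $\rho_2(p)>0.6$ all the way out to $p=1$, but your stated justification for ignoring $\rho_3$ is false: the inequality $\rho_3(p)=p^2<\rho_2(p)=\tfrac{2p(3-2p)}{3}$ holds only for $p<6/7$, and in fact $\rho_3(1)=1>2/3=\rho_2(1)$. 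This slip is harmless for the lemma since $\rho_2$ alone already clears $0.6$ on $[p^\ast,1]$ (as your quadratic check shows), but you should correct the reason given. Relatedly, $\rho_2$ is not increasing on all of $[p^\ast,1]$ (it peaks at $p=3/4$), so the ``single crossover is the minimum'' monotonicity argument needs the quadratic check you already supply rather than monotonicity of $\rho_2$.
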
 
\begin{proof}
We consider three cases. For $c\leq 1.325$, if we sell at price $1$ then the revenue ratio is $\frac{1-(1-c/3)^3}{c}$, which in this range gives a ratio of at least $0.623$. For $1.325<c\leq 2.571$, if we sell at price $2$ then the revenue ratio is $\frac{2(1-(1-c/3)^3-3(c/3)(1-c/3)^2)}{c}$, which in this range also gives a ratio of at least $0.623$. For $c>2.5714$ if we sell at price $3$ then the revenue ratio is $\frac{3(c/3)^3}{c}$, which in this range gives a ratio of at least $0.734$.
\end{proof}

In order to conclude the lower bound for the case of $c<40$ and an arbitrary $k$, we use the following result by Anderson and Samuels \cite{anderson1965some} regarding the  Poisson approximation, $P(m;c)=e^{-c}$, to the binomial distribution, $B(m;k,c/k)$.

\begin{lemma}{(Anderson and Samuels \cite{anderson1965some})}  \label{quet_binom} 
The difference $P(m;c)-B(m;k,c/k)$ is positive when $m\leq c(1-1/(k+1))$.
 
\end{lemma}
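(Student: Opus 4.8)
The plan is to obtain Lemma~\ref{quet_binom} as the limiting case of a monotonicity statement in the number of trials. Fix $m$ and $c$ (implicitly $0\le c\le k$, so that the binomial is a probability distribution, whence $m\le ck/(k+1)<c\le k$), and abbreviate $B_n:=B(m;n,c/n)$, the binomial CDF with mean $c$. Since $n\mapsto n/(n+1)$ is increasing, the hypothesis $m\le ck/(k+1)$ gives $m\le cn/(n+1)$ for every $n\ge k$. Also $b(i;n,c/n)=\binom{n}{i}(c/n)^i(1-c/n)^{n-i}\to e^{-c}c^i/i!$ for each fixed $i$, so summing over $i=0,\dots,m$ yields $B_n\to P(m;c)$ as $n\to\infty$. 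It therefore suffices to prove
\[
B_n<B_{n+1}\qquad\text{whenever }\ m\le\tfrac{cn}{n+1};
\]
for then $(B_n)_{n\ge k}$ is strictly increasing with limit $P(m;c)$, so $B_k<P(m;c)$.

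The main tool is an exact ``leave-one-out'' identity. Let $X_1,\dots,X_{n+1}$ be i.i.d.\ Bernoulli$(p)$ with sum $S\sim\mathrm{Bin}(n+1,p)$, and put $S^{(j)}:=S-X_j\sim\mathrm{Bin}(n,p)$. The event $\{S^{(j)}\le m\}$ is the disjoint union of $\{S\le m\}$ and $\{S=m+1,\ X_j=1\}$; taking probabilities, summing over $j=1,\dots,n+1$, and using $\sum_j X_j=S$ gives $(n+1)B(m;n,p)=(n+1)B(m;n+1,p)+(m+1)\,b(m+1;n+1,p)$, i.e.
\[
B(m;n,p)=B(m;n+1,p)+\tfrac{m+1}{n+1}\,b(m+1;n+1,p),
\]
where $b(\cdot;\cdot,\cdot)$ denotes the binomial pmf. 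I would apply this with $p=c/(n+1)$ and combine it with $\tfrac{d}{dp}B(m;n,p)=-n\,b(m;n-1,p)$ (so $B(m;n,\cdot)$ is decreasing), written in integral form over $[c/(n+1),\,c/n]$. Using the identities $\tfrac{m+1}{n+1}\binom{n+1}{m+1}=\binom{n}{m}$ and $n\binom{n-1}{m}=(n-m)\binom{n}{m}$ and cancelling the common $\binom{n}{m}$, the inequality $B_{n+1}>B_n$ becomes
\[
(n-m)\int_{c/(n+1)}^{c/n}p^{\,m}(1-p)^{\,n-1-m}\,dp\ \ge\ \left(\tfrac{c}{n+1}\right)^{m+1}\left(1-\tfrac{c}{n+1}\right)^{\,n-m},
\]
which I will call $(\dagger)$.

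Inequality $(\dagger)$ is the crux, and it is exactly here that the threshold enters. The integrand $g(p)=p^m(1-p)^{n-1-m}$ is unimodal with maximum at $p^\star=m/(n-1)$. In the critical regime $\tfrac{c(n-1)}{n}\le m\le\tfrac{cn}{n+1}$ one has $p^\star\ge c/n$, so $g$ is increasing on $[c/(n+1),\,c/n]$; bounding the integral below by $g\!\left(\tfrac{c}{n+1}\right)$ times the interval length $\tfrac{c}{n(n+1)}$, inequality $(\dagger)$ collapses after cancellation to $\tfrac{n-m}{n}\ge 1-\tfrac{c}{n+1}$, i.e.\ exactly $m(n+1)\le cn$ --- precisely the hypothesis, and tight at $m=cn/(n+1)$; this is the regime that fixes the constant $1-1/(k+1)$. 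For smaller $m$, further below the mean (so that $p^\star$ lies in the interior of the interval or to its left), the endpoint bound is too crude and one must estimate $\int_{c/(n+1)}^{c/n}g$ more carefully --- e.g.\ by splitting the interval at $p^\star$, or by a mean-value argument combined with the Beta-integral form of $B(m;n,\cdot)$ --- but there $g$ is comparatively large on the bulk of the interval and $(\dagger)$ holds with room to spare. Turning that into a clean uniform estimate is the step I expect to be the main obstacle; the near-mean case above, where $(\dagger)$ is tight, is delicate but transparent. Finally, the estimates feeding into $(\dagger)$ are strict (the relevant pmf's are positive on $(0,1)$), so $B_n<B_{n+1}$ is strict and $B_k<P(m;c)$ follows.
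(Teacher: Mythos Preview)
The paper does not prove this lemma at all; it is quoted from Anderson and Samuels and used as a black box, so there is no in-paper argument to compare against. Your monotonicity strategy---showing that $B_n:=B(m;n,c/n)$ is strictly increasing in $n$ and tends to $P(m;c)$---is a natural one (and is in fact the route taken in the original Anderson--Samuels paper), and your leave-one-out identity together with the derivative formula $\tfrac{d}{dp}B(m;n,p)=-n\,b(m;n-1,p)$ are both correct, as is the reduction to the integral inequality $(\dagger)$.

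The proposal, however, has a genuine gap that you yourself flag. You verify $(\dagger)$ only in the narrow band $\tfrac{c(n-1)}{n}\le m\le \tfrac{cn}{n+1}$, where $g$ is increasing on the interval of integration and the crude left-endpoint bound suffices (and, nicely, is tight exactly at the threshold). For $m<\tfrac{c(n-1)}{n}$ you give no estimate, only the remark that $(\dagger)$ ``holds with room to spare''. That case cannot be skipped: for any fixed $m<c$ one has $m<\tfrac{c(n-1)}{n}$ for all sufficiently large $n$, so almost the entire sequence $(B_n)_{n\ge k}$ lives in that regime, and there $g$ is decreasing on $[c/(n+1),c/n]$ (or peaks inside it), so the left-endpoint bound you used is in the wrong direction. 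Until you supply an actual lower bound on $\int_{c/(n+1)}^{c/n} g(p)\,dp$ in that regime---for instance via the right-endpoint value combined with a direct ratio comparison of $g(c/n)$ against the right-hand side of $(\dagger)$, or by working with the Beta-integral representation $B(m;n,p)=(n-m)\binom{n}{m}\int_p^{1} t^{m}(1-t)^{n-1-m}\,dt$ to compare $B_n$ and $B_{n+1}$ directly---the argument does not establish $B_n<B_{n+1}$ for all $n\ge k$, and hence does not prove the lemma.
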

From lemma \ref{quet_binom}, we conclude:
\begin{corollary} \label{cor_binom_poi}
If $d\leq c(1-1/(k+1))+1$, then $h_d(c)$ is a lower bound for the probability of preforming a sell when when offering a bundle price of $d$. 
\end{corollary}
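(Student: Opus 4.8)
The plan is to read Corollary~\ref{cor_binom_poi} off directly from Lemma~\ref{quet_binom}, combined with the identification (made in the Preliminaries) of the true selling probability with a binomial tail. There is no heavy lifting left to do: the analytic content, namely the sign of $P(m;c)-B(m;k,c/k)$, has already been imported from Anderson and Samuels.

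First I would recall the exact distributional identity. When each of the $k$ item valuations is an independent Bernoulli with $\mathbb{P}(1)=c/k$, the buyer's valuation sum is a Binomial$(k,c/k)$ random variable, so the probability that the buyer accepts the bundle at posted price $d$ — equivalently, that the sum is at least $d$ — is exactly $1-B(d-1;k,c/k)$. By definition, $h_d(c)=1-P(d-1;c)$ is the Poisson analogue of that same tail. So the claim "$h_d(c)$ is a lower bound for the sell probability" is the assertion $1-P(d-1;c)\leq 1-B(d-1;k,c/k)$, i.e.\ $P(d-1;c)\geq B(d-1;k,c/k)$.

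Second, I would apply Lemma~\ref{quet_binom} with $m=d-1$. Its hypothesis $m\leq c(1-1/(k+1))$ becomes $d-1\leq c(1-1/(k+1))$, which rearranges to $d\leq c(1-1/(k+1))+1$ — precisely the standing assumption of the corollary. (I would also dispatch the harmless edge case: we may assume $d\geq 1$ so that $m=d-1\geq 0$ and the cumulative functions are meaningful; for $d=0$ the statement is vacuous since the sell probability is then $1$.) The lemma therefore yields $P(d-1;c)-B(d-1;k,c/k)>0$. Taking complements reverses the inequality, giving $h_d(c)=1-P(d-1;c)<1-B(d-1;k,c/k)$, and the right-hand side is exactly the probability of making the sale at price $d$. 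Hence $h_d(c)$ is a (strict) lower bound for that probability, as claimed.

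I do not expect a genuine obstacle here; the only points requiring care are purely bookkeeping: (i) lining up the event "sum $\geq d$" with the cumulative functions $B$ and $P$, which are indexed at $d-1$; and (ii) remembering that complementation flips the direction of the Anderson--Samuels inequality, so that in the stated range the Poisson tail lies \emph{below} the binomial tail. Everything else is a one-line substitution into Lemma~\ref{quet_binom}.
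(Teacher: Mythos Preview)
Your proposal is correct and takes essentially the same approach as the paper, which simply states that the corollary follows from Lemma~\ref{quet_binom} without spelling out the details; you have filled in exactly the intended substitution $m=d-1$ and the complementation step.
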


Using this approximation we can handle the case of small $c$ by simply finding the optimal price for each $c$.
\begin{lemma} \label{c_l_40__k_geq_4}
For integer $k\geq 4$ and $c < 40$, the distribution $F$ over \(\{ 0,1\}\) with \(\mathbb{P}(1) = c/k\) satisfies
\[BRev(F^{k})/SRev(F^{k}) \geq \rr\]
\end{lemma}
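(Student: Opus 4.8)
The plan is to show that for every $c<40$ one can select an integer bundle price $d$ that is \emph{admissible} — i.e.\ $1\le d\le c\bigl(1-\tfrac1{k+1}\bigr)+1$, so that Corollary \ref{cor_binom_poi} applies — and for which $d\,h_d(c)\ge\rr\,c$. Then, since $SRev(F^k)=c$ and $BRev(F^k)\ge d\bigl(1-B(d-1;k,c/k)\bigr)\ge d\,h_d(c)$ for such $d$, the lemma follows. Because $k\ge4$ we have $\tfrac{k+1}{k}\le\tfrac54$, so $c\ge\tfrac54(d-1)$ already forces $d-1\le c\,\tfrac{k}{k+1}$, i.e.\ admissibility (this also gives $d\le k$, so the price is valid). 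Writing $g_d(c):=d\,h_d(c)/c$, it therefore suffices to produce, for each $c<40$, a $d$ with $c\ge\tfrac54(d-1)$ and $g_d(c)\ge\rr$.

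The second step is to record the relevant calculus of the $g_d$. From $h_d'(c)=e^{-c}c^{d-1}/(d-1)!$ one finds that the sign of $g_d'$ equals that of $d\,p_d(c)-h_d(c)$ (with $p_j(c)=e^{-c}c^j/j!$), a function that is positive for small $c$, tends to $-1$, and whose derivative has the sign of $d-1-c$; hence each $g_d$ is unimodal in $c$, increasing then decreasing, with peak at some $c_d^\star>d-1$. Likewise $(d{+}1)h_{d+1}(c)-d\,h_d(c)=h_{d+1}(c)-d\,p_d(c)$ is negative for small $c$, tends to $1$, and is first decreasing then increasing, so it has a unique positive zero $c_d$; price $d{+}1$ beats price $d$ exactly for $c>c_d$, the zeros satisfy $0=:c_0<c_1<c_2<\cdots$, on $[c_{d-1},c_d]$ price $d$ is the best integer price, and by construction $c_2=\cc$ and $g_2(\cc)=\rr$. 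One also checks $c_d>d$ for all $d$ (at $c=d$, $\tfrac c{d+1}+\tfrac{c^2}{(d+1)(d+2)}+\cdots<d$).

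Finally I would put this together. On each $[c_{d-1},c_d]$, unimodality makes $g_d$ attain its minimum at an endpoint, and endpoint values telescope via $g_d(c_{d-1})=g_{d-1}(c_{d-1})$ and $g_d(c_d)=g_{d+1}(c_d)$; hence over $(0,40)$ the envelope $\max_d g_d(c)$ is at least $\min_d g_d(c_d)$ (the minimum over the finitely many $d$ with $c_d<40$, i.e.\ $d$ up to roughly $30$). So the task reduces to $g_d(c_d)\ge\rr$ for all such $d$, with equality only at $d=2$: the case $d=1$ is clear ($g_1$ is decreasing and $g_1(c_1)=g_2(c_1)$ with $c_1<\cc$, which a short computation shows exceeds $\rr$), $d=2$ is the equality case, and for $d\ge3$ one argues $g_d(c_d)$ is increasing in $d$ — equivalently, the worst indifference point is $c_2$ — which together with $g_3(c_3)>\rr$ finishes it (if this monotonicity is awkward to prove, the bounded range of $d$ makes a direct check feasible). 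Admissibility holds because on $[c_{d-1},c_d]$ one has $c\ge c_{d-1}\ge\tfrac54(d-1)$ across the relevant range, the single near-equality $c_1\approx\tfrac54$ being absorbed by the fact that $g_1\ge\rr$ slightly past $c_1$ and price $1$ is always admissible. The crux — and the main obstacle — is this uniform control of $g_d(c_d)$: since $c_d$ lies just \emph{past} the peak $c_d^\star$ of $g_d$, comparing the two endpoint values of $g_{d+1}$ on $[c_d,c_{d+1}]$ is not automatic, so one needs either a genuine monotonicity argument for $d\mapsto g_d(c_d)$ or a careful finite verification, plus the bookkeeping that makes admissibility work for every $k\ge4$ simultaneously.
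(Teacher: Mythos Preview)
Your plan is essentially the paper's own proof: use Corollary~\ref{cor_binom_poi} to replace the binomial tail by $h_d(c)$, partition $(0,40)$ by the crossing points $c_d$ where price $d$ and $d{+}1$ tie, exploit the unimodality of $g_d(c)=d\,h_d(c)/c$ so that only the endpoint values $g_d(c_d)$ need checking, and then verify admissibility for $k\ge4$ via $c_{d-1}\ge\tfrac54(d-1)$. The only difference is presentational---the paper carries out the finite endpoint check directly in a table (finding the minimum $\rr$ at $c_2=\cc$), whereas you first attempt a monotonicity argument for $d\mapsto g_d(c_d)$ before conceding that the same finite verification is the fallback; either way the substance is identical.
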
 
\begin{proof}
Given the parameter $c$ we can determent a bundle price which will give a revenue which reaches a bundle to separate ratio of at least the desired bound.
\begin{figure}
\centering
\includegraphics[width=0.5\textwidth]{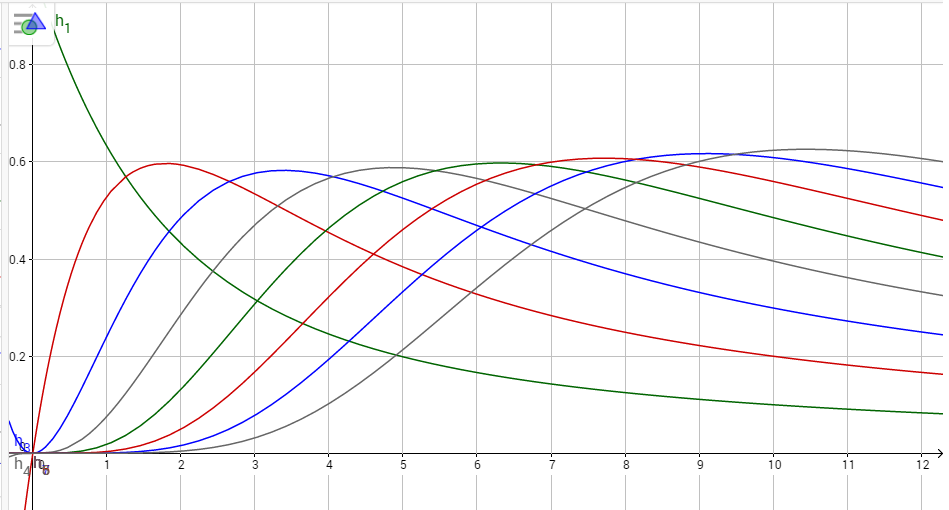}
\caption{$d\cdot h_d(c)/c$ for $d\leq 8$ as a function of $c$}
\end{figure}

The segments in left column of Table \ref{table:pricing_ta} are matched with a price which ensures a revenue which reaches at least the ratio in the right column for any $c<40$. The end points of each segment are where \( dh_d(x) =  (d+1)h_{d+1}(x)\). It is easy to see that the largest gap is indeed at $\cc$, the solution of \( 2h_2(x) =  3h_3(x)\).

\begin{table}[ht]
\centering
\begin{tabular}{ l | c || r }
  \hline
  $c$ segments & Price for segment & Gap in the end points\\
  $c\leq 1.256...$ & $d=1$ & 1, 0.569 \\
  $ 1.256... \leq c\leq 2.655...$ & $d=2$ & 0.569, \textbf{0.559} \\
  $ 2.655... \leq c\leq 4.057...$ & $d=3$ & \textbf{0.559}, 0.569 \\
  $ 4.057... \leq c\leq 5.441...$ & $d=4$ & 0.569, 0.581 \\
  $ 5.441... \leq c\leq 6.805...$ & $d=5$ & 0.581, 0.594 \\
  $ 6.805... \leq c\leq 8.152...$ & $d=6$ & 0.594, 0.605 \\
  $ 8.152... \leq c\leq 9.482...$ & $d=7$ & 0.605, 0.615 \\
  $ 9.482... \leq c\leq 10.79...$ & $d=8$ & 0.615, 0.624 \\
 $ 10.79... \leq c\leq 12.10...$ & $d=9$ & 0.624, 0.633 \\
 $ 12.10... \leq c\leq 13.39...$ & $d=10$ & 0.633, 0.641 \\
 $ 13.39... \leq c\leq 14.68...$ & $d=11$ & 0.641, 0.648 \\ 
 $ 14.68... \leq c\leq 15.95...$ & $d=12$ & 0.648, 0.654 \\
 $ 15.95... \leq c\leq 17.21...$ & $d=13$ & 0.654, 0.661 \\
 $ 17.21... \leq c\leq 18.47...$ & $d=14$ & 0.661, 0.666 \\ 
 $ 18.47... \leq c\leq 19.72...$ & $d=15$ & 0.666, 0.672 \\
 $ 19.72... \leq c\leq 20.96...$ & $d=16$ & 0.672, 0.677 \\
 $ 20.96... \leq c\leq 22.20...$ & $d=17$ & 0.677, 0.682 \\ 
 $ 22.20... \leq c\leq 23.44...$ & $d=18$ & 0.682, 0.686 \\
 $ 23.44... \leq c\leq 24.67...$ & $d=19$ & 0.686, 0.690 \\
 $ 24.67... \leq c\leq 25.89...$ & $d=20$ & 0.690, 0.694 \\ 
 $ 25.89... \leq c\leq 27.11...$ & $d=21$ & 0.695, 0.698 \\
 $ 27.11... \leq c\leq 28.33...$ & $d=22$ & 0.698, 0.702 \\
 $ 28.33... \leq c\leq 40      $ & $d=23$ & 0.702, 0.574 \\ 
 
  \hline  
\end{tabular}
\caption{Pricing for $c<40$}
\label{table:pricing_ta}
\end{table}

The derivative of $h_d(x)/x$ satisfies\( (h_d(x)/x)' = e^{-x} p(x)/x^2 + e^{-x} p(x)/x - 1/x^2 -e^{-x} p'(x)/x\) where $p(x)=\sum_{i = 0}^{d-1}{x^i/i!}$. For each $d$, the range dominated by $d$ has at most one extreme point in it which can easily be verified as a maximum point.
Thus, it is indeed enough to check only the end points of each range.\footnote{Both ends are needed due to some of the $d$s getting maximum points in the range dominated by $d$. See Figure 1.}
 
As stated by Corollary \ref{cor_binom_poi}, if $d\leq c(1-1/(k+1))+1$ then $h_d(c)$ is a lower bound for the probability of getting price $d$. For $k\geq 4$, the condition holds for all of the above.

\end{proof}
 
We can now prove the lower bound of Theorem \ref{main_theorem} for a general distribution.
\begin{theorem1}
For every integer $k\geq 1$ and every one-dimensional distribution $F$,
\[BRev(F^{k}) \geq \rr kRev(F) = \rr SRev(F^{k})\] 
\end{theorem1}
\begin{proof}
Let $X$ be distributed according to $F$; let $p$ be the optimal Myerson price for $X$ and $q = \mathbb{P}(X \geq p)$, so $Rev(F) = pq$.
Let $G$ be the distribution over \(\{ 0,p\}\) with \(\mathbb{P}(p) = q\). 

We claim that \(BRev(F^{k})/SRev(F^{k}) \geq BRev(G^{k})/SRev(G^{k})\). 

First, $SRev(G^{k}) = pqk = SRev(F^{k})$.
Second, for any price $pd$ where $d\in\mathbb{N}$, the probability of selling a bundle according to $F$ is at least as much as according to $G$, since in the latter case we must have at least $d$ events out of $k$ with
probability $q$ for each, and for $G$ it is sufficient for getting at least $d$. For any other price $gp$, the chances of selling a bundle according to $G$ is the same as selling in $\lfloor g\rfloor p$, and for $F$ the probability is at least that.

Let $c=qk$ and $H$ be the distribution over \(\{ 0,1\}\) with \(\mathbb{P}(1) = q = c/k\).
Since $BRev(G^{k}) = pBRev(H^{k})$, $SRev(G^{k}) = pSRev(H^{k})$ and for $H$ we have already shown this threshold, therefore the bound holds.
\end{proof}

\bibliography{a.bbl} 
\bibliographystyle{plain}

\end{document}